\newtheorem{theorem}{Theorem}
\newtheorem{claim}{Claim}
\newtheorem{corollary}[theorem]{Corollary}
\newcommand{\negspace}{\hspace*{-2mm}}
\newcommand{\oE}{\overline{E}}
\newcommand{\oN}{\overline{N}}
\newcommand{\bx}{{\bf x}}
\newcommand{\bY}{{\bf Y}}
\newcommand{\suppress}[1]{}
\newcommand{\cD}{\mathcal{D}}
\newcommand{\cA}{\mathcal{A}}
\newcommand{\cN}{\mathcal{N}}
\title{A Simple Necessary and Sufficient Condition for the Double Unicast Problem}
\author{
\authorblockN{Sagar Shenvi and Bikash Kumar Dey}
\authorblockA{Department of Electrical Engineering \negspace \\
Indian Institute of Technology Bombay\negspace \\
Mumbai, India, 400 076\\
\{sagars,bikash\}@ee.iitb.ac.in}
}
\begin{document}

\maketitle
\begin{abstract}
We consider a directed acyclic network where there are two source-terminal
pairs and the terminals need to receive the symbols generated
at the respective sources. Each source independently generates
an i.i.d. random process over the same alphabet.
Each edge in the network is error-free,
delay-free, and can carry one symbol from the alphabet per use.
We give a simple necessary and sufficient condition for
being able to simultaneously satisfy the unicast requirements of
the two source-terminal pairs at rate-pair $(1,1)$ using vector network
coding. The condition is also
sufficient for doing this using only ``XOR'' network coding and is
much simpler compared to the necessary and sufficient conditions
known from previous work. Our condition also yields a simple characterization
of the capacity region of a double-unicast network which does not support
the rate-pair $(1,1)$.
\end{abstract}

\section{Introduction}\label{section:Introduction}
The field of network coding started with the seminal work of Ahlswede et. al.~\cite{ahlswede1} which showed that throughput improvements could be obtained by allowing intermediate nodes in a network to perform coding operations on incoming packets. The capacity of a multicast network was shown to be equal to the minimum of the min-cuts between the source and the individual terminals. In~\cite{li1} it was shown that linear network coding is sufficient to achieve the multicast capacity. An algebraic formulation was given for the multicast scenario in~\cite{koetter2}. The area of network coding for multicast has since seen rapid developments (e.g.~\cite{jaggi1}~\cite{ho1}).

Though the multicast case is well understood, the promise of network
coding for arbitrary source-terminal demands has not yet been gauged.
The problem of finding necessary and sufficient conditions for an arbitrary 
network information flow problem seems a difficult task~\cite{lehman1}. 
Even for the case of multiple unicast sessions, explicit necessary and
sufficient 
conditions for feasibility of network coding are not known.
An outer bound, called network sharing bound, on the achievable rate-region for arbitrary multi-source
multi-terminal network was presented in \cite{yan1}. This, as a special
case, gives a necessary condition for any rate-pair to be achievable
in a double-unicast network.
Recently, Wang and Shroff~\cite{wang1} gave 
a path-based necessary and sufficient condition for a directed acyclic network to 
be able to support two unicast sessions by scalar network coding. Their condition
is in terms of existence of a pair of triplets of paths satisfying
a certain condition. They also stated~\cite[Corollary 3]{wang1} that the condition implied by the
network sharing bound is also sufficient for this case. 

In this paper, we present a considerably simpler necessary and sufficient
condition for a directed acyclic network to simultaneously support two unicast sessions
at rate $(1,1)$. This characterization does not assume scalar network coding.
We provide a rigorous proof for the condition. Though on analysing for
our case, one can
conclude the equivalence of our condition with that implied by
network sharing bound, our condition is useful for three reasons:
(i) the simplicity of our condition is useful and illuminating, (ii) 
the condition is proved to be necessary and sufficient even under vector network coding, (iii) the condition allows for exact capacity region characterization,
namely $r_1+r_2 \leq 1$,
of networks which do not support the rate $(1,1)$. 
Our proof is constructive, and it also reveals that whenever a network
supports two simultaneous unicast sessions at rate $(1,1)$, a XOR coding scheme suffices.
This was also noted in \cite{wang2}.

The paper is organized as follows. In Section \ref{section:Terminology}, we introduce
some notations and terminology. In Section \ref{section:Results}, we present our
results without proofs. The proof of the main result is
presented in Section \ref{section:proof1}. The paper is concluded
in Section \ref{section:conclusion}.

\section{Notations and Terminology}\label{section:Terminology}
We consider a directed acyclic network represented by a graph $G(V,E)$.
In network coding, the intermediate nodes combine the incoming messages
to construct the outgoing messages. In this paper, we assume that each
source generates a i.i.d. random process over an alphabet $\cA$, and
the processes generated by different sources are independent.
Each edge can carry one symbol from the alphabet per unit time
in a error-free and delay-free manner.
If the alphabet is a field, a network code is said to be linear if
the nodes perform linear combining over that field. In a general
network, there may be multiple sources and multiple terminals demanding
some of the source processes. Any network code which satisfies
the terminals' demands is said to {\it solve} the network, and the
network code is called a {\it solution}.

If a network is solvable by linear coding over the binary field
$F_2$, then clearly it is also solvable over any other field
by nodes performing only addition or subtraction.
Moreover,
for any alphabet, one can define an abelian group structure
isomorphic to the integers modulo $|\cA|$, and if a network 
has a linear coding solution over $F_2$ then it also has
a solution over $\cA$ where each node performs addition and
subtraction in the corresponding abelian group.
We call such a network code as XOR code.

The starting node of an edge $e$ is called its tail, and is denoted
by $tail (e)$. The end-node of $e$ is called its head, and is denoted
by $head (e)$.
A path $P$ from $v_1$ to $v_l$ - also called a $(v_1,v_l)$ path - is a sequence
of nodes $v_1,v_2,\ldots, v_l$
and edges $e_1,e_2,\ldots, e_{l-1}$ such that $v_i = tail (e_i)$
and $head (e_i) = v_{i+1}$ for $1\leq i \leq l-1$.
The node pair $(v_i,v_j)$ is said to be connected if there exists a
$(v_i,v_j)$ path. If there is a $(v_i,v_j)$ path, then we call $v_i$ an
ancestor of $v_j$, and $v_j$ a descendant of $v_i$. A section of the
path $P$ starting from the node $v_j$ and ending at $v_k$
is denoted by $P(v_j:v_k)$. For any path or a section of a path $P$,
$V(P)$ denotes the set of vertices on it.
If $P_1$ is a path from $v_i$ to $v_j$ and $P_2$ is a path from
$v_j$ to $v_k$ then $P_1P_2$ denotes the path from $v_i$ to $v_k$
obtained by concatenating $P_1$ and $P_2$.

In this paper, we consider a directed acyclic double-unicast network
where there are two sources $s_1$ and $s_2$, and two terminals $t_1$ and
$t_2$. The terminal $t_1$ wants to recover the source process
generated by 
$s_1$ and the terminal $t_2$ wants to recover the source process
generated by $s_2$. 
A rate-pair $(r_1,r_2)$ is said to be achievable if for some $n$,
there is an $n$-length vector network code (which uses each link $n$
times) which simultaneously allows communication of a symbol-vector
$\bx_1\in \cA^{r_1n}$ from $s_1$ to $t_1$ and a symbol-vector
$\bx_2\in \cA^{r_2n}$ from $s_2$ to $t_2$.
The capacity region
of a double-unicast network is defined as the set of all achievable rate pairs $(r_1,r_2)$ at which
the two unicasts can be supported and its symmetric capacity $r$ is defined as the maximum $r$ s.t.
$(r,r)$ is achievable.

For the rest of the paper, we assume that the network under consideration
is a directed acyclic double-unicast network where both the source-terminal
pairs $(s_1,t_1)$ and $(s_2,t_2)$ are connected i.e. there exists a path from $s_1$ to $t_1$ and from $s_2$ to $t_2$. Otherwise the capacity-region of the
network is characterised by the max-flow min-cut theorem.

\section{ Main Results}\label{section:Results}
We now present the main result of the paper in the following theorem.
We emphasize that in all that follows, $(s_1,t_1)$ and $(s_2,t_2)$ are
assumed to be connected.

\begin{theorem}\label{theorem:mainres}
 A directed acyclic double-unicast network
can not simultaneously support two unicast sessions $(s_1,t_1)$ and $(s_2,t_2)$
 at rate $(1,1)$ if and only if it contains an edge whose removal disconnects
$(s_1,t_1)$, $(s_2,t_2)$, and at least one of $(s_1,t_2)$ and $(s_2,t_1)$.
Furthermore, (i) whenever such an edge does not exist, the rate-pair $(1,1)$
is achieved by using XOR coding and (ii) whenever such an edge exists,
the capacity region of the network is given by $r_1+r_2 \leq 1$.
\end{theorem}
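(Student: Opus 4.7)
My plan splits into three pieces. The easy \emph{converse} direction assumes a bad edge $e$ exists and simultaneously shows that rate $(1,1)$ is infeasible and that every achievable pair satisfies $r_1+r_2\leq 1$. Matching that bound on the achievable side is a routine time-sharing argument, since $(1,0)$ and $(0,1)$ are achievable by single unicast. The hard \emph{direct} direction is the construction: assuming no bad edge, build an explicit rate-$(1,1)$ XOR scheme.

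For the converse, let $e$ be an edge whose removal disconnects $(s_1,t_1)$, $(s_2,t_2)$ and, WLOG, $(s_1,t_2)$. Fix any block-length-$n$ vector network code achieving $(r_1,r_2)$, and let $Y_e\in\cA^n$ be the vector carried by $e$. In $G\setminus e$ no source reaches $t_2$, so the part of $t_2$'s input not passing through $e$ is constant, and the decoder therefore produces $X_2=f_2(Y_e)$ as a function of $Y_e$ alone. In $G\setminus e$ only $s_2$ reaches $t_1$, so the part of $t_1$'s input not passing through $e$ is some $g(X_2)$; since $X_2=f_2(Y_e)$, this is again a function of $Y_e$, making $X_1$ a function of $Y_e$. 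Independence of the sources then gives
\[H(X_1)+H(X_2)=H(X_1,X_2)\leq H(Y_e)\leq n\log|\cA|,\]
so $r_1+r_2\leq 1$. The matching achievability of the entire region $r_1+r_2\leq 1$ follows by time sharing the two single-unicast routings, each feasible because $(s_i,t_i)$ is connected.

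For the direct direction, the strategy is to induct on a well-chosen complexity measure of a pair of paths $P_1:s_1\to t_1$ and $P_2:s_2\to t_2$. If $P_1,P_2$ can be chosen edge-disjoint, just route $X_1$ on $P_1$ and $X_2$ on $P_2$. Otherwise, pick a shared edge $e$ and apply the no-bad-edge hypothesis to get a dichotomy: either (a) one of $(s_1,t_1),(s_2,t_2)$ survives in $G\setminus e$, which allows rerouting one of $P_1,P_2$ to reduce the complexity measure, or (b) both $(s_1,t_2)$ and $(s_2,t_1)$ survive in $G\setminus e$, giving the cross-paths needed for a butterfly at $e$: the edge $e$ carries $X_1\oplus X_2$, while $X_1$ reaches $t_2$ via an $s_1\to t_2$ path in $G\setminus e$ and $X_2$ reaches $t_1$ via an $s_2\to t_1$ path in $G\setminus e$; both terminals then decode.

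The principal obstacle will be executing this inductive step cleanly. A naive reroute in case (a), or the attachment of cross-paths in case (b), can introduce fresh edge-sharings elsewhere, so one must choose $P_1,P_2$ and the complexity measure (for instance $|E(P_1)\cap E(P_2)|$ together with a lexicographic tiebreaker in a topological order) so that a strictly smaller subinstance remains, still satisfying the no-bad-edge hypothesis on the relevant subgraph. A further subtlety is that case (b) may need to absorb a whole maximal segment of forced sharing into a single butterfly rather than just one edge, so isolating that structural segment correctly is where the main technical work lies. Once the inductive step is set up, checking that the final code uses only symbol-wise additions in the abelian group structure on $\cA$ --- and is therefore an XOR code in the sense defined in Section~\ref{section:Terminology} --- is a bookkeeping step.
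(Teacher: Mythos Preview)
Your converse argument is correct and, in fact, slightly more direct than the paper's: the paper adds a fictitious edge $e'$ from $\text{head}(e)$ to $t_1$ and argues that $t_1$ can then recover both source vectors; you instead observe directly that, since $e$ cuts both sources from $t_2$, we have $X_2=f_2(Y_e)$, and since $e$ cuts $s_1$ from $t_1$, the messages reaching $t_1$ are functions of $(Y_e,X_2)=$ functions of $Y_e$. Either route yields $r_1+r_2\leq 1$. (Your phrase ``the part of $t_1$'s input not passing through $e$'' is loose --- messages can mix --- but the underlying claim that $t_1$'s inputs are functions of $(Y_e,X_2)$ is exactly right.)

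The genuine gap is in your direct direction, specifically in case~(a). Your plan is: if removing the shared edge $e$ leaves one of $(s_1,t_1),(s_2,t_2)$ connected, reroute that path to reduce a complexity measure such as $|E(P_1)\cap E(P_2)|$. But this cannot work as stated. The paper isolates precisely the subclass (its Case~IIB) in which \emph{no} edge disconnects both $(s_1,t_1)$ and $(s_2,t_2)$ --- so your case~(a) applies to \emph{every} shared edge --- and yet, by the standing Case~II assumption, there are \emph{no} edge-disjoint $(s_1,t_1)$ and $(s_2,t_2)$ paths. Hence no sequence of reroutes can ever drive $|E(P_1)\cap E(P_2)|$ to zero, and no lexicographic tiebreaker rescues this: you are trying to induct toward a base case that is provably unreachable.

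What actually resolves this regime is a structure you have not anticipated: the paper builds, by an iterative ``handle-finding'' procedure along $P(n_1{:}n_k)$, a \emph{grail} network with multiple handles $Q_1,\ldots,Q_I$ (Claims~\ref{claim:exist} and~\ref{claim:soln}), and exhibits an XOR solution on it that uses several coding nodes, not one. This is neither a routing nor a single butterfly, so it does not fit your (a)/(b) dichotomy. Your remark that case~(b) ``may need to absorb a whole maximal segment of forced sharing into a single butterfly'' is on target for the paper's Case~IIA (where there \emph{is} an edge disconnecting both unicasts and the cross-paths provably avoid the whole middle segment $I_{n_1,n_k}$, via Claims~\ref{claim:allpaths} and~\ref{claim:nopath}); but it does not cover Case~IIB, which is where the real work is and where your inductive scheme stalls.
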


\begin{corollary}
The symmetric capacity of a double-unicast network is either
$0,1/2,$ or $\geq 1$.
\end{corollary}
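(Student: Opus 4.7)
The plan is to do a simple case analysis driven directly by Theorem~\ref{theorem:mainres}, after first peeling off the degenerate case of disconnected source-terminal pairs which the theorem does not cover.

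First, I would handle the case where at least one of $(s_1,t_1)$, $(s_2,t_2)$ is not connected. In that case no nonzero rate is achievable for the corresponding unicast session (the sink simply has no information about the source), so for any $r>0$ the pair $(r,r)$ is not achievable and the symmetric capacity is $0$.

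For the remaining case, assume both $(s_1,t_1)$ and $(s_2,t_2)$ are connected, so Theorem~\ref{theorem:mainres} applies. If the network has no edge whose removal disconnects $(s_1,t_1)$, $(s_2,t_2)$, and at least one of $(s_1,t_2)$, $(s_2,t_1)$, then by Theorem~\ref{theorem:mainres} the rate pair $(1,1)$ is achievable, so the symmetric capacity is $\geq 1$. Otherwise such an edge exists, and by part~(ii) of Theorem~\ref{theorem:mainres} the capacity region is $\{(r_1,r_2):r_1+r_2\leq 1\}$. Setting $r_1=r_2=r$ forces $r\leq 1/2$, so the symmetric capacity is at most $1/2$.

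To complete this last case I would exhibit a scheme attaining $(1/2,1/2)$, which fixes the symmetric capacity at exactly $1/2$. Pure time-sharing suffices: choose any $(s_1,t_1)$ path $P_1$ and any $(s_2,t_2)$ path $P_2$ (these exist by connectedness), and use a block length $n=2$ where in the first use each edge of $P_1$ carries a symbol of $\bx_1\in\cA$ and all other edges carry an arbitrary fixed symbol, and in the second use each edge of $P_2$ carries a symbol of $\bx_2\in\cA$. This delivers one symbol to each terminal in two channel uses, so $(1/2,1/2)$ is achievable and the symmetric capacity equals $1/2$. There is no substantial obstacle here; the only nontrivial ingredient is invoking Theorem~\ref{theorem:mainres}, and the mildly subtle point is remembering to check achievability of $(1/2,1/2)$ rather than relying only on the outer bound $r_1+r_2\leq 1$.
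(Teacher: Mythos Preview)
Your proposal is correct and matches the natural derivation the paper leaves implicit (the corollary is stated without a separate proof). One small redundancy: in the last case you re-derive achievability of $(1/2,1/2)$ by time-sharing, but Theorem~\ref{theorem:mainres}(ii) already asserts that the capacity region \emph{equals} $\{(r_1,r_2):r_1+r_2\leq 1\}$, which includes achievability of the boundary; indeed the paper's proof of Theorem~\ref{theorem:mainres}(ii) uses exactly the time-sharing argument you give, so you could simply cite the theorem rather than repeat it.
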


\section{Proof of Theorem \ref{theorem:mainres}}\label{section:proof1}

\subsection{Proof of necessity}
We consider two symbols $x_1$ and
$x_2$ generated at $s_1$ and $s_2$ respectively.
In this part of the proof, we show that whenever an edge satisfying the
hypothesis of the theorem does not exist, $x_1$ and $x_2$ can be communicated
to $t_1$ and $t_2$ respectively by XOR coding. 
We consider two cases:

{\bf Case I:} {\it There exist two edge-disjoint $(s_1,t_1)$ and $(s_2,t_2)$ paths}

In this case, clearly the hypothesis of the theorem is satisfied and the 
network is solvable by routing.

{\bf Case II:} {\it There are no two edge-disjoint $(s_1,t_1)$ and
$(s_2,t_2)$ paths}

So for any $(s_1,t_1)$ path,
removing all its edges disconnects $(s_2,t_2)$. Furthermore let us
assume that no edge satisfies the hypothesis of Theorem~\ref{theorem:mainres} i.e., removal of no edge disconnects $(s_1,t_1)$, $(s_2,t_2)$
and one or more of $(s_1,t_2)$ and $(s_2,t_1)$.
We will show that the network is solvable using a XOR code in this scenario.

Let $P$ be a $(s_1,t_1)$ path. Let $n_1$ be the first node on $P$
such that there exists a $(s_2,n_1)$ path and let
$n_k$ be the last node on $P$ such that there exists a $(n_k,t_2)$ path.
WLOG, let us assume that:

1. No ancestor of $n_1$ has paths from both
 the sources to itself (else we could have taken $P$
to be the path which contained this ancestor).

2. No descendant of $n_k$ has paths from itself to both the terminals
(else we could have taken $P$ to be a path
which contained this descendant).

Let us choose a $(s_2,n_1)$ path and a $(n_k,t_2)$ path and call them
respectively $T_1$ and $T_2$ (see Fig. \ref{fig:caseIIA}).

 We consider two cases:

{\bf Case IIA:} {\it There exists an edge $l$ whose removal 
disconnects $(s_1,t_1)$ and $(s_2,t_2)$ but not $(s_1,t_2)$ or
 $(s_2,t_1)$.}

Let $\cD$ be the set of all such edges.
Note that all the edges in $\cD$ are on $P(n_1:n_k)$. 
In fact, it is evident that
all the edges in $\cD$ lie on any path from $n_1$ to $n_k$.

Let $I_{n_1,n_k}$ be the set of intermediate nodes between $n_1$ and $n_k$
including themselves, i.e., $n_1,n_k,$ and the nodes which are both descendants of $n_1$ 
and ancestors of $n_k$.

\begin{claim}\label{claim:allpaths}
(i) Any $(s_1,t_2)$ path that contains any node from
$I_{n_1,n_k}$, contains all edges from $\cD$.
(ii) Any $(s_2,t_1)$ path that contains any node from
$I_{n_1,n_k}$, contains all edges from $\cD$.
\end{claim}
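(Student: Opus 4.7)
The plan is to prove (i) and (ii) via a common two-walk pincer argument; I describe (i) in detail. Fix $e \in \cD$ and an $(s_1,t_2)$ path $Q$ passing through some $v \in I_{n_1,n_k}$, split $Q = Q_1 \cdot Q_2$ at $v$, and choose a $(n_1,v)$ path $A$ and a $(v,n_k)$ path $B$ (trivial if $v = n_1$ or $v = n_k$, respectively). The aim is to derive a contradiction from the hypothesis that $e \notin Q$.

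Before the main argument I would establish the auxiliary exclusions $e \notin T_1$ and $e \notin T_2$. If $e \in T_1$, then both endpoints of $e$ lie in $V(T_1) \cap V(P)$, so both are reachable from $s_2$ along $T_1$ and from $s_1$ along $P$. Since $T_1$ terminates at $n_1$, no node of $T_1$ can be a strict descendant of $n_1$ on $P$ (that would create a cycle), so the tail of $e$ is either a strict ancestor of $n_1$ on $P$ or $n_1$ itself; in the latter case the predecessor of $n_1$ on $T_1$ is still a strict ancestor of $n_1$ on $P$ with paths from both sources. Either way the WLOG choice of $n_1$ is violated. The exclusion $e \notin T_2$ is symmetric via the WLOG assumption on $n_k$. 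Combining these exclusions with the $(s_2,t_2)$ walk $T_1 \cdot R \cdot T_2$ for any $(n_1,n_k)$ path $R$, together with the fact that removing $e$ disconnects $(s_2,t_2)$, yields the stronger statement noted before the claim: $e$ lies on every $(n_1,n_k)$ path, so in particular $e \in P(n_1:n_k)$ while $e \notin P(s_1:n_1)$ and $e \notin P(n_k:t_1)$.

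The main argument is then the pincer. The $(s_1,t_1)$ walk $W_1 = Q_1 \cdot B \cdot P(n_k:t_1)$ must traverse $e$ because $e$ disconnects $(s_1,t_1)$; since $e \notin Q_1$ (by assumption) and $e \notin P(n_k:t_1)$ (auxiliary), we deduce $e \in B$, which also forces $v \neq n_k$. Symmetrically, the $(s_2,t_2)$ walk $W_2 = T_1 \cdot A \cdot Q_2$ must traverse $e$; since $e \notin T_1$ (auxiliary) and $e \notin Q_2$ (assumption), we deduce $e \in A$ and $v \neq n_1$. Writing $e = (u,w)$, the tail $u$ is a non-last node on the $(n_1,v)$ path $A$ (so $u$ is an ancestor of $v$ and $u \neq v$) and a non-last node on the $(v,n_k)$ path $B$ (so $u = v$ or $u$ is a strict descendant of $v$); since $u \neq v$, $u$ is both a strict descendant and an ancestor of $v$, producing a cycle and contradicting acyclicity. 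Hence $e \in Q$.

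Part (ii) follows by the same strategy after interchanging the roles of sources and terminals: for an $(s_2,t_1)$ path $Q' = Q'_1 \cdot Q'_2$ through $v$, the two walks are $P(s_1:n_1) \cdot A \cdot Q'_2$ (an $(s_1,t_1)$ walk) and $Q'_1 \cdot B \cdot T_2$ (an $(s_2,t_2)$ walk), invoking the auxiliary exclusions $e \notin P(s_1:n_1)$ and $e \notin T_2$. I expect the main technical difficulty to be the auxiliary step, where the WLOG conditions on $n_1$ and $n_k$ must be combined with careful DAG positioning to rule out every placement of a hypothetical edge $e$ on $T_1$ or $T_2$; once the auxiliary exclusions are in place, the pincer itself is essentially immediate.
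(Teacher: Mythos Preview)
Your argument is correct and uses the same two underlying constructions as the paper---an $(s_1,t_1)$ walk and an $(s_2,t_2)$ walk through the intermediate node---but packages them differently. The paper assumes without loss of generality that the intermediate node $z$ already lies on $P(n_1:n_k)$ and then does a single case split on whether $l$ sits above or below $z$ along $P$: if above, $U(s_1{:}z)P(z{:}t_1)$ is an $(s_1,t_1)$ path avoiding $l$; if below, $T_1P(n_1{:}z)U(z{:}t_2)$ is an $(s_2,t_2)$ path avoiding $l$. Either branch gives an immediate contradiction, so the auxiliary exclusions and the final cycle argument are never made explicit (the paper silently relies on $l\in P(n_1{:}n_k)$ forcing $l\notin T_1$ by acyclicity). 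Your pincer fires both walks at once and extracts the contradiction from $e\in A\cap B$ via acyclicity; the payoff is that you never invoke the paper's somewhat informal WLOG reduction to $z\in P$, at the price of more bookkeeping. One minor cleanup in your auxiliary step: the ``latter case'' $tail(e)=n_1$ is actually vacuous, since an edge of $T_1$ cannot have tail $n_1$ ($n_1$ is the terminal node of $T_1$), so the unjustified claim that the $T_1$-predecessor of $n_1$ lies on $P$ can simply be dropped.
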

\begin{proof} 
We prove only the first part by contradiction. The second part follows
similarly.
Let $U$ be a $(s_1,t_2)$ path containing
$z\in I_{n_1,n_k}$ and not containing $l\in \cD$ (see Fig.~\ref{fig:caseIIA-a}).
Note that there are paths
from $z$ to both $t_1$ and $t_2$ since $n_k$ is a descendant of $z$.
Without loss of generality, we assume that $z$ is in $P (n_1: n_k)$.
So, any edge in $\cD$ is either above or below $z$ on $P (n_1: n_k)$.
If $l$ is above $z$, then $U(s_1:z)P(z:
t_1)$ is a $(s_1,t_1)$ path which does not contain $l \in \cD$ -
a contradiction. If $l$ is below $z$, then
$T_1P(n_1:z)U(z:t_2)$ is a $(s_2,t_2)$ path
which does not contain $l \in \cD$ - a contradiction.
\end{proof}

\begin{claim}\label{claim:nopath}
(i) There exists a $(s_1,t_2)$ path which 
does not contain any node from $I_{n_1,n_k}$.
(ii) There exists a $(s_2,t_1)$ path which
does not contain any node from $I_{n_1,n_k}$.
\end{claim}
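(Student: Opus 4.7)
The plan is to derive both parts of Claim~\ref{claim:nopath} as immediate contrapositives of Claim~\ref{claim:allpaths}, using the defining property of the set $\cD$ that was set up in Case IIA.

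First I would fix any edge $l \in \cD$. By the very definition of $\cD$, removal of $l$ does not disconnect the pair $(s_1,t_2)$, so there exists a $(s_1,t_2)$ path $U$ that avoids $l$. (One should check that $(s_1,t_2)$ is indeed connected in Case II, otherwise the conclusion would be vacuous or false; this holds because if $(s_1,t_2)$ were disconnected, no $(s_1,t_1)$ path and no $(s_2,t_2)$ path could share an edge — any shared edge would splice into an $(s_1,t_2)$ path — which would put us in Case I, not Case II.) Now I apply Claim~\ref{claim:allpaths}(i) contrapositively: if $U$ contained any node $z \in I_{n_1,n_k}$, then $U$ would have to contain every edge of $\cD$, in particular $l$, contradicting the choice of $U$. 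Hence $U$ avoids $I_{n_1,n_k}$ entirely, which is exactly part~(i).

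For part~(ii), the same edge $l$ works verbatim. By definition of $\cD$ the removal of $l$ also does not disconnect $(s_2,t_1)$ (and $(s_2,t_1)$ is connected in Case II by the symmetric argument above), so a $(s_2,t_1)$ path $U'$ exists which does not use $l$. Applying Claim~\ref{claim:allpaths}(ii) contrapositively forces $U'$ to avoid every node of $I_{n_1,n_k}$.

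There is essentially no obstacle here: all the structural content has already been extracted in Claim~\ref{claim:allpaths}, and this claim is its logical closure. The only two things worth flagging are (a) that a single $l \in \cD$ simultaneously produces both required paths — because by the definition of $\cD$ one and the same $l$ fails to disconnect both $(s_1,t_2)$ and $(s_2,t_1)$ — and (b) the need to justify that $(s_1,t_2)$ and $(s_2,t_1)$ are connected at all in Case~II, which is what lets us speak of any $(s_1,t_2)$ or $(s_2,t_1)$ path in the first place.
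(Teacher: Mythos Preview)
Your proposal is correct and is essentially the paper's own argument: the paper assumes for contradiction that every $(s_1,t_2)$ path meets $I_{n_1,n_k}$, invokes Claim~\ref{claim:allpaths} to conclude every such path contains all of $\cD$, and notes this contradicts the defining property of $\cD$ --- which is logically the same contrapositive you wrote out. Your extra justification that $(s_1,t_2)$ and $(s_2,t_1)$ are connected in Case~II is a careful point the paper leaves implicit (and in fact follows even more directly here from the existence of the paths $P(s_1{:}n_k)T_2$ and $T_1P(n_1{:}t_1)$).
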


\begin{proof} We prove only the first part. The second part follows
similarly. If the first part is not true, then by the previous claim,
all $(s_1,t_2)$ paths will contain $\cD$. Then removing any
edge in $\cD$ also disconnects $(s_1, t_2)$ - a contradiction. 
\end{proof}

Let $U_1$ and $U_2$ respectively be $(s_1,t_2)$ and $(s_2,t_1)$ paths not
containing any node from
$I_{n_1,n_k}$ (see Fig. \ref{fig:caseIIA-b}).
The network clearly contains a butterfly network as a substructure and
thus has a solution. Specifically, we can transmit $x_1+x_2$ on
$P(n_1,n_k)$, $x_1$ on $U_1$, and $x_2$ on $U_2$. This completes the
proof for Case IIA.

\begin{figure}[h]
\centering
\subfigure[An impossibility in case IIA]{
\includegraphics[width=1.6in]{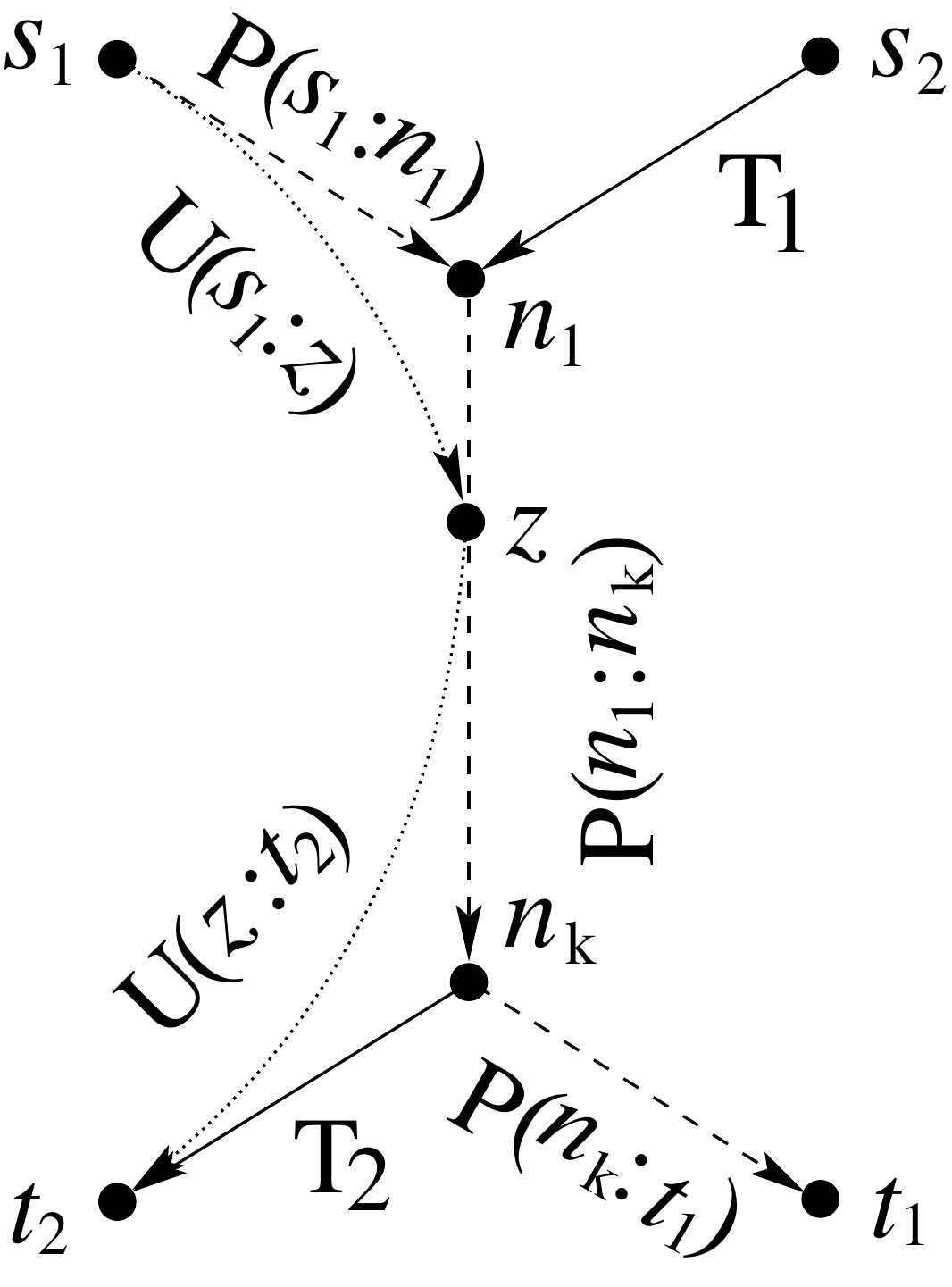}
\label{fig:caseIIA-a}
}
\subfigure[A hidden butterfly]{
\includegraphics[width=1.6in]{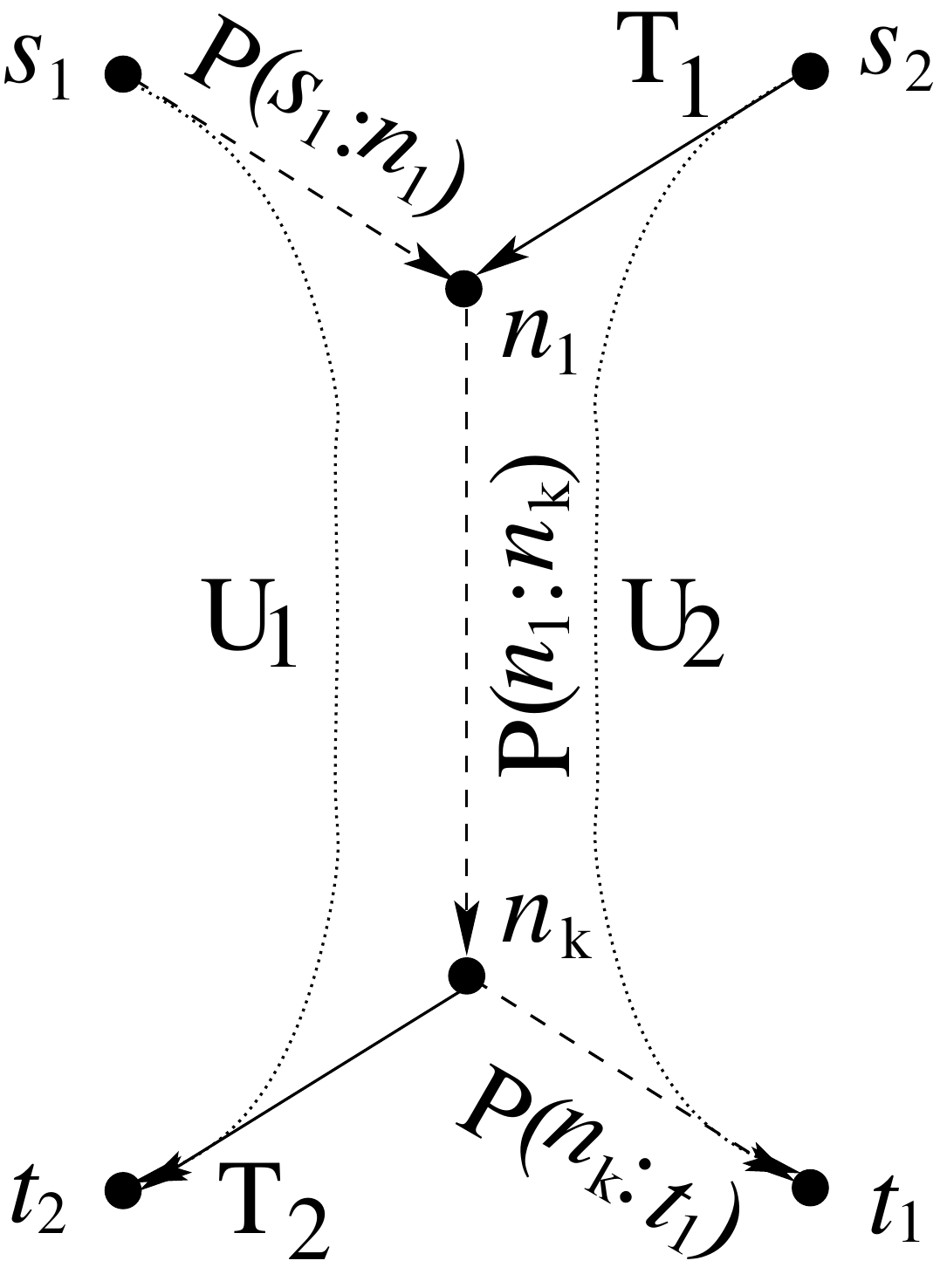}
\label{fig:caseIIA-b}
}
\caption[]{Case IIA}
\label{fig:caseIIA}
\end{figure}

{\bf Case IIB:} {\it There exists no edge in $\cN$ whose removal 
disconnects both $(s_1,t_1)$ and $(s_2,t_2)$.}

Let $\overline{N}=\{n_2,n_3,...,n_{k-1}\}$ be the ordered set of
internal nodes on $P(n_1:n_k)$. Let
$\overline{E}=\{e_1,e_2,...e_{k-1}\}$ be the ordered set of 
edges on $P(n_1:n_k)$
(i.e. $n_i=tail(e_i)$, $i\in\{1,...k-1\}$ and $n_k=head(e_{k-1})$). 
Recall that removing any single edge from $\overline{E}$
 does not disconnect both $(s_1,t_1)$ and $(s_2,t_2)$.

We split our proof into the following three sub-cases:

{\it Case IIB(i):} {\it Removing edge $e_1$ disconnects $(s_2,t_2)$ 
but not $(s_1,t_1)$.}

In this case there is at least one $(s_1,t_1)$ path
which does not contain $e_1$. Consider one such path $P'$.
Note that $P'$ does not contain any node which is on 
any path from $s_2$ to $n_1$ or on any path from
$n_k$ to $t_2$ by definition of $n_1$ and $n_k$.

Let $m(P')$ be the last node on this path so that there 
is no common edge between $P'(s_1:m(P'))$ and 
$\overline{E}$ - that is, $m(P')$ is the tail node of 
the first common edge, if there is such an edge, between $P'$ and
$\overline{E}\backslash\{e_1\}$, or $t_1$ if there is no such common edge.
Thus $m(P')$ will be one of the nodes in 
$\overline{N}\bigcup\{t_1\}$. So, the nodes $m(P^\prime)$ for
all such paths $P^\prime$ are totally ordered based on ancestry. Let us denote 
the last among them by $m_0$ and call the corresponding 
path-section from $s_1$ to $m_0$ by $P_0$. 
So, $P_0$ does not have any edge from $\overline{E}$.

\begin{claim}\label{claim:not}
$m_0\neq t_1$, i.e., $m_0 \in \oN$.
\end{claim}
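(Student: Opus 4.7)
The plan is a proof by contradiction. Suppose $m_0 = t_1$. Then by definition of $m_0$ there exists an $(s_1,t_1)$ path $P_0$ that shares no edge with $\oE$; equivalently, $P_0$ is completely edge-disjoint from the middle section $P(n_1:n_k)$.

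Next I would exhibit a concrete $(s_2,t_2)$ path, namely $Q \defn T_1 \cdot e_1 \cdot P(n_2:n_k) \cdot T_2$, formed by concatenating the $(s_2,n_1)$-path $T_1$, the edge $e_1$ together with the section $P(n_2:n_k)$, and the $(n_k,t_2)$-path $T_2$. Because we are in Case~II, no $(s_1,t_1)$ and $(s_2,t_2)$ paths are edge-disjoint, so $P_0$ and $Q$ must share at least one edge $e^*$. The edges of $Q$ split into the edges of $T_1$, the edges of $P(n_1:n_k) \subseteq \oE$, and the edges of $T_2$; since $P_0$ contains no edge of $\oE$, the shared edge $e^*$ must belong to $T_1$ or to $T_2$.

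If $e^* \in T_1$, let $z = tail(e^*)$. Then $z$ lies on both $P_0$ and $T_1$, yielding an $(s_1,z)$ path as a prefix of $P_0$ and an $(s_2,z)$ path as a prefix of $T_1$; moreover $z$ is a \emph{strict} ancestor of $n_1$ because $T_1$ ends at $n_1$, forcing $z \neq n_1$ and making $T_1(z:n_1)$ nontrivial. This contradicts the WLOG assumption that no ancestor of $n_1$ has paths from both sources to itself. Symmetrically, if $e^* \in T_2$, the node $w = head(e^*)$ is a strict descendant of $n_k$ (since $T_2$ originates at $n_k$, $w \neq n_k$) with paths to both $t_1$ (via $P_0(w:t_1)$) and $t_2$ (via $T_2(w:t_2)$), contradicting the dual WLOG assumption that no descendant of $n_k$ has paths from itself to both terminals. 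Either case yields a contradiction, so $m_0 \neq t_1$ and $m_0 \in \oN$.

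The main conceptual move is pairing $P_0$ with the specific $Q$ so that the Case~II assumption localizes the inevitable shared edge into either $T_1$ or $T_2$; once this is done, the extremality of the choice of $P$ (encoded in the two WLOG assumptions) makes the contradiction immediate. The only bookkeeping to watch is strict ancestry/descendancy of the shared edge's endpoint, which is automatic from the fact that $T_1$ terminates at $n_1$ and $T_2$ begins at $n_k$.
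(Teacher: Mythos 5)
Your proof is correct and follows essentially the same route as the paper: both pair $P_0$ with the $(s_2,t_2)$ path $T_1P(n_1:n_k)T_2$ and derive a contradiction with the Case~II hypothesis that no edge-disjoint $(s_1,t_1)$ and $(s_2,t_2)$ paths exist. The paper asserts the needed edge-disjointness in one line, leaning on its earlier remark that $P'$ contains no node of any $(s_2,n_1)$- or $(n_k,t_2)$-path, whereas you re-derive it directly from the two WLOG maximality assumptions on $n_1$ and $n_k$ (including the small check that the shared endpoint is a \emph{strict} ancestor of $n_1$ or descendant of $n_k$) --- a useful expansion of the paper's terse justification, but not a different argument.
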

\begin{proof}
If $m_0=t_1$, then clearly there are two edge disjoint
$(s_1,t_1)$ and $(s_2, t_2)$ paths and so the network would fall in Case I.
\end{proof}

Now, the network may have a variant of the
``grail'' with multiple ``handles'' (See Fig. \ref{fig:multigrail}) or an augmented half-butterfly (See
Fig. \ref{fig:2I-b})
as identified by the following iterative algorithm. In fact, the network
in Fig.~\ref{fig:2I-b} can also be seen as a grail network with an extra
connection.

With the initialization as $i=1$, $v_0 = n_1$, $v_1 = m_0$ and $Q_1 = P_0$,
we find the handles iteratively as follows.

If $v_i = t_1$ or $t_2$, then terminate. Otherwise increment $i$ and define
$v_{i}$ to be the unique last node (ancestrally) such that

(i) there is a path $Q_{i}$ (a new handle) from some node in
$V(P(v_{i-2}: v_{i-1}))\setminus \{v_{i-2}, v_{i-1}\}$ to $v_{i}$ with no
common edge with $\oE$, and

(ii) there is a path from $v_{i}$ to $t_1$ or $t_2$.

\begin{figure}[ht]
\centering
\subfigure[The grail for $I=4$]{
\includegraphics[width=1.6in]{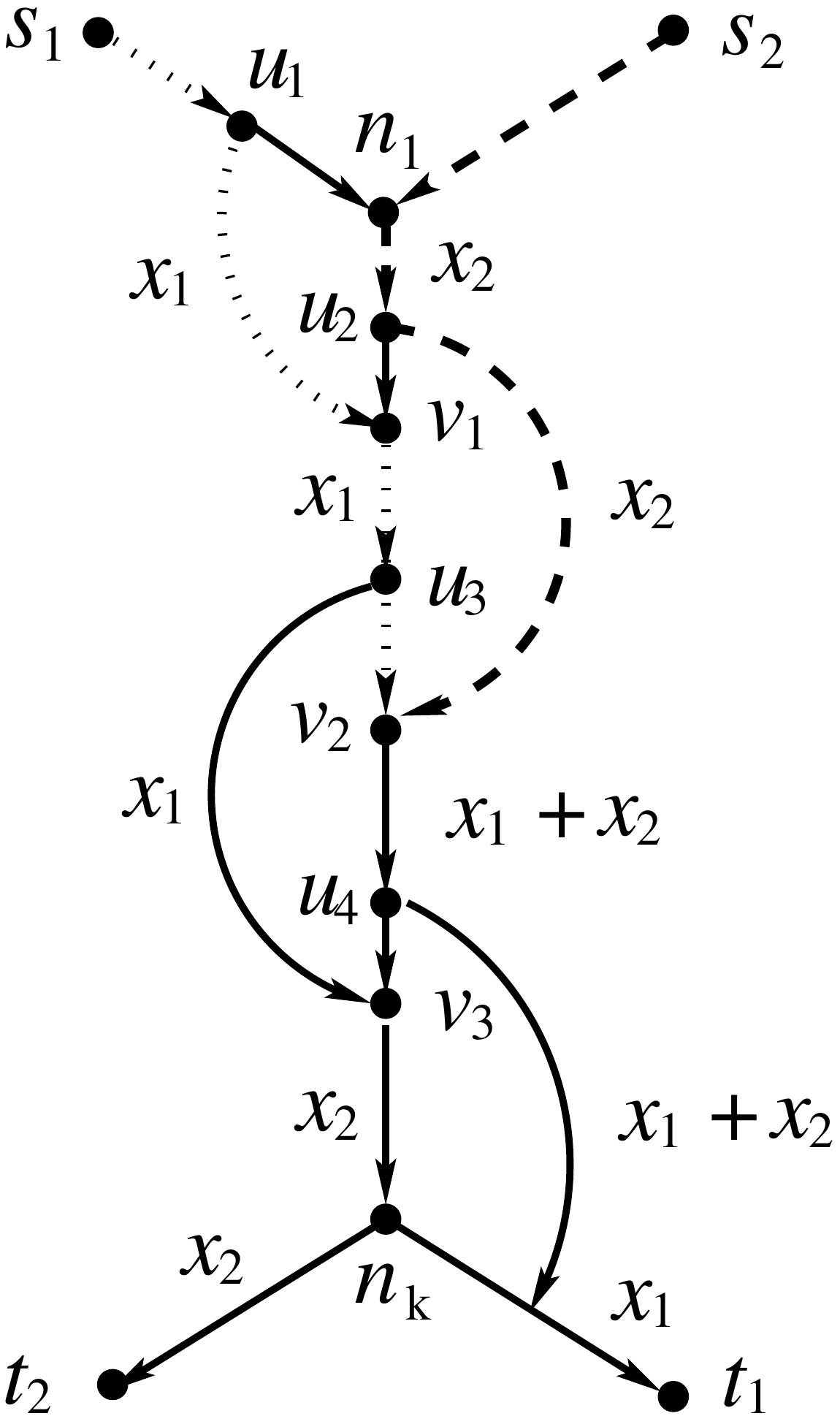}
\label{fig:multigrail-a}
}
\subfigure[The grail for $I=5$]{
\includegraphics[width=1.6in]{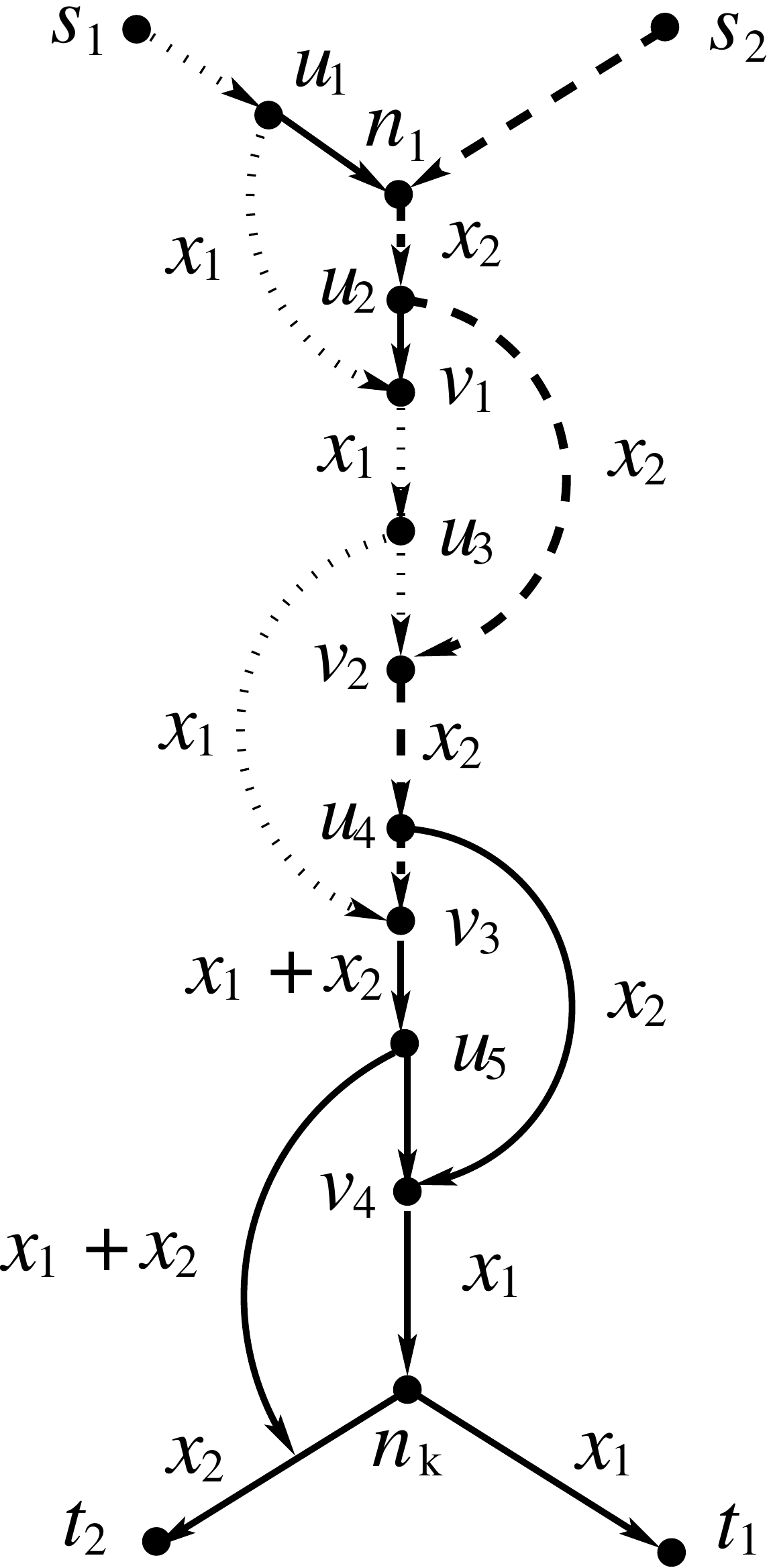}
\label{fig:multigrail-b}
}
\caption[]{The grails with multiple ``handles'' ($I>2$)}
\label{fig:multigrail}
\end{figure}
We denote the first node on $Q_i$ by $u_i$.
We now prove the following claims about this construction.

\begin{claim}\label{claim:exist}
(i) Such a node $v_{i}$ exists.

(ii) Any such $v_{i}$ is in $\oN \cup \{t_1, t_2\}$.

(iii) $v_{i}$ is unique.
\end{claim}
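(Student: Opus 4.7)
The plan is to establish the three parts of Claim \ref{claim:exist} in order, by induction on $i$, leaning on the Case IIB hypothesis (no single edge of $\oE$ disconnects both source-terminal pairs) together with the inductively-constructed $v_{i-2}, v_{i-1} \in \oN$ and handles $Q_1, \ldots, Q_{i-1}$.

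For existence, part (i): let $n_j = v_{i-1}$ and consider the outgoing spine edge $e_j \in \oE$. By the Case IIB hypothesis, removing $e_j$ leaves at least one of $(s_1,t_1)$, $(s_2,t_2)$ connected, yielding a surviving path $R$. I would trace how $R$ interacts with the spine and the previously constructed handles, and by splicing an $\oE$-free initial portion of $R$ (extended, if necessary, through a section of $Q_{i-1}$) produce a sub-path that departs from an interior node of $P(v_{i-2}:v_{i-1})$, avoids all of $\oE$, and reaches a node $w$ with onward reachability to $t_1$ or $t_2$. This exhibits at least one candidate for $v_i$, so an ancestrally-maximal one exists in the DAG; the assumptions on $n_1$ and $n_k$ (no earlier spine node sees both sources, no later spine node reaches both terminals) eliminate boundary degeneracies and keep the candidate within the correct region.

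For part (ii), membership in $\oN \cup \{t_1, t_2\}$: I would show that an ancestrally-last candidate $v_i$ must lie in this set. Take any onward path $R'$ from $v_i$ to its target terminal. If $R'$ avoids $\oE$ entirely, then appending a suitable prefix to $Q_i$ yields a handle ending at the terminal; if $R'$ meets $\oE$, then truncating $R'$ just before its first $\oE$-edge yields a handle ending on the spine, hence in $\oN$. In either case the new endpoint strictly descends $v_i$ unless $v_i$ already equals that endpoint, which forces $v_i \in \oN \cup \{t_1, t_2\}$ by the maximality assumption.

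For part (iii), uniqueness: suppose for contradiction that two incomparable maximal candidates $v_i, v_i'$ exist with handles $Q_i, Q_i'$ starting at $u_i, u_i' \in V(P(v_{i-2}:v_{i-1}))$. Using acyclicity of $G$, the interior nodes $u_i, u_i'$ are linearly ordered on $P$; without loss of generality suppose $u_i$ precedes $u_i'$. Splicing $P(u_i':v_{i-1})$ with the onward path guaranteed by the definition of $v_i$ (and truncating as in part (ii)) produces a handle starting strictly after $u_i'$ and ending at a descendant of $v_i$ or of $v_i'$ that is strictly below at least one of them, contradicting the maximality of the other. Hence the ancestrally-last candidate is unique.

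The main obstacle will be part (i): the Case IIB hypothesis is only a local, edge-by-edge statement, and translating it into a handle that starts in the correct spine interval, avoids every edge of $\oE$ globally, and lands at a terminal-reachable node requires careful case analysis on which source-terminal pair survives $e_j$-removal and on how the bypass route interacts with the previously constructed handles. Parts (ii) and (iii) reduce to maximality and splicing arguments once (i) is in place.
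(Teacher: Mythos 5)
Your high-level plan (induct on $i$, remove the outgoing spine edge $e(v_{i-1})$, use the Case~IIB hypothesis to obtain a surviving path, and extract an $\oE$-free segment from it) matches the paper's. But the proposal leaves the decisive step of part~(i) as an acknowledged gap, and the mechanism you suggest for closing it points in the wrong direction.

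The paper does \emph{not} route the surviving path $P^*$ through any previously constructed handle $Q_{i-1}$; the handles play no role in locating the departure point. Instead it uses two observations you don't make. First, the case split is on which pair survives: if $P^*$ is an $(s_1,t_1)$ path avoiding $e(v_{i-1})$, then the definition of $m_0$ (as the ancestrally-last first-departure point over all $(s_1,t_1)$ paths avoiding $e_1$) forces $P^*$ to share an edge with $P(n_1:v_{i-1})$; if $P^*$ is an $(s_2,t_2)$ path, then it must contain $e_1$, because in Case~IIB(i) the edge $e_1$ is a cut for $(s_2,t_2)$. Second, the \emph{maximality of the earlier} $v_j$'s (not the boundary conditions on $n_1$ and $n_k$, which you invoke) is what pins the head $\underline{v}$ of the last common edge between $P^*$ and $P(n_1:v_{i-1})$ into the open interval $V(P(v_{i-2}:v_{i-1}))\setminus\{v_{i-2},v_{i-1}\}$: if $\underline{v}$ lay in an earlier interval, the continuation of $P^*$ up to the next $\oN\cup\{t_1,t_2\}$ node would have been a better candidate at that earlier stage, contradicting the choice of that $v_j$. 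Your proposal names none of these facts, and without them there is no reason the extracted segment should start where it must; you flag exactly this as ``the main obstacle,'' so it is a real gap rather than an omitted routine step.

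Parts~(ii) and~(iii) are fine in substance, though heavier than necessary: part~(ii) is essentially the paper's observation that taking $v_i$ ancestrally last forces it into $\oN\cup\{t_1,t_2\}$; for part~(iii) the paper just notes that this set is totally ordered by ancestry on the spine, so an ancestrally-last element is unique, whereas your splicing contradiction reproves (in a somewhat vague way) a fact that follows immediately from total order once (ii) is established.
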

\begin{proof} We outline a proof by induction.

Since $v_1=m_0 \in \oN$, all of the above statements are true
for $i=1$. We prove these statements for $i$, assuming
them to be true for $< i$.
By our induction hypothesis $v_{i-1}$ satisfies all of the  above, particularly $v_{i-1} \in \oN$. Let $e(v_{i-1})$ be the edge in $\oE$ that starts at $v_{i-1}$, i.e., $v_{i-1}$ is the tail of $e(v_{i-1})$. 

By assumption, removing $e(v_{i-1})$ does not disconnect both
$(s_1,t_1)$ and $(s_2,t_2)$. Hence there must be at least one $(s_1,t_1)$
or $(s_2,t_2)$ path which does not contain $e(v_{i-1})$. Let one such
path be $P^*$. If $P^*$ is a $(s_1,t_1)$ path, then it must have a
common edge with $P(n_1:v_{i-1})$ by definition of $m_0$. If $P^*$ is a
$(s_2,t_2)$ path, then it must contain $e_1$ because $e_1$ disconnects
$(s_2, t_2)$. In either case,
by definition of $v_j$ for $j<i$, the last common edge
between $P^*$ and $P(n_1:v_{i-1})$ must have the head node, say $\underline{v}$, in
$V(P(v_{i-2}:v_{i-1}))\setminus\{v_{i-2}, v_{i-1}\}$. Let the first node
from $\{t_1,t_2\} \cup \oN \setminus \{n_1,n_2,...,v_{i-1}\}$ which is on $P^*$
be denoted by $\overline{v}$. Then $P^*(\underline{v}:\overline{v})$ satisfies
the condition of $Q_i$ in the definition of $v_i$. So $\overline{v}$
satisfies both the conditions in the definition of $v_i$, and thus
guarantees the existence of $v_i$. Since $v_i$ is an ancestrally last node
satisfying the two defining conditions, it follows that $v_i \in \oN
\cup \{t_1,t_2\}$. Since this set is totally ordered, $v_i$ is unique.
\end{proof}

Now, if the final iteration is $i=I$, i.e. if $v_{I} = t_1$ or $t_2$,
then the following claims hold. Recall that $Q_i$ is the $i$-$th$ handle (path) from $u_i$ to $v_i$.

\begin{claim}\label{claim:soln}
(i) If $I=2$ and $v_2 = t_1$ then $Q_1$ and $Q_2$ do not share a
common node.

(ii) If $I=2$ and $v_2=t_2$ then $Q_1$ and $Q_2$ share a common edge.

(iii) If $I\neq 2$ then $Q_{i}$ and $Q_j$ do not share a common node
for $1\leq i < j \leq I$. 

(iv) If $I\neq 2$ is even, then there are two edge-disjoint
paths, one from $s_1$ to $v_{I-2}$ via $u_{I-1}$, and the other
from $s_2$ to $v_{I-2}$.

(v) If $I\neq 2$ is odd, then there are two edge-disjoint
paths, one from $s_1$ to $v_{I-2}$, and the other from
$s_2$ to $v_{I-2}$ via $u_{I-1}$.

(vi) If $I\neq 2$ is even, then $v_I=t_1$.

(vii) If $I$ is odd, then $v_I=t_2$.

\end{claim}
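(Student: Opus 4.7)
The plan is to prove the seven parts in an order that reflects their inductive dependency: first the handle-disjointness parts (iii), (i), (ii), then the edge-disjoint path constructions (iv), (v), and finally the parity assertions (vi), (vii). The key lever is the ``ancestrally last'' maximality of each $v_i$, together with the $\oE$-avoidance of every handle and the DAG structure of the network.

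For part (iii) I would argue by contradiction. Suppose $Q_i$ and $Q_j$ share a common node $w$ for some $1\leq i<j\leq I$. If $j-i\geq 2$, the spliced directed path $Q_j(u_j:w)\cdot Q_i(w:v_i)$ runs from $u_j$, a strict descendant of $v_{j-2}$ on $P$, to $v_i$, which is an ancestor of $v_{j-2}$; concatenated with $P(v_i:u_j)$ this yields a directed cycle in the DAG, a contradiction. If $j=i+1$, the two splicings $Q_{i+1}(u_{i+1}:w)\cdot Q_i(w:v_i)$ and $Q_i(u_i:w)\cdot Q_{i+1}(w:v_{i+1})$ (with $i=1$ handled analogously, using the defining property of $m_0$ instead of a handle starting from an interior of $P$) exhibit $v_i$ and $v_{i+1}$ as candidates for each other's defining iteration, so maximality forces $v_i=v_{i+1}$. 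But then either $i+1=I$ and $v_{i+1}\in\{t_1,t_2\}$ while $v_i\in\oN$, an immediate contradiction, or $i+1<I$ and the interior of $P(v_i:v_{i+1})$ is empty so iteration $i+2$ cannot be defined, again contradicting $I\neq 2$.

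For the $I=2$ parts (i) and (ii) I would exploit the Case II hypothesis that no two edge-disjoint $(s_1,t_1)$ and $(s_2,t_2)$ paths exist. If $v_2=t_1$ but $Q_1$ and $Q_2$ shared a node $w$, spliced combinations of $Q_1$, $Q_2$, $T_1$, $T_2$, and sections of $P$ would produce two edge-disjoint source-terminal paths, contradicting Case II. Symmetrically, if $v_2=t_2$ and $Q_1,Q_2$ shared no common edge, the combined structure $Q_1\cup Q_2\cup T_1\cup P\cup T_2$ would contain two edge-disjoint source-terminal paths, again contradicting Case II.

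Parts (iv) and (v) are constructive. In the odd case, the two edge-disjoint paths to $v_{I-2}$ are the zigzags $Q_1\cdot P(v_1:u_3)\cdot Q_3\cdot P(v_3:u_5)\cdot Q_5\cdots$ from $s_1$ through the odd-indexed handles, and $T_1\cdot P(n_1:u_2)\cdot Q_2\cdot P(v_2:u_4)\cdot Q_4\cdots$ from $s_2$ through the even-indexed handles, the latter passing through $u_{I-1}$ en route to $v_{I-2}$. The even case is the mirror image with the roles of $s_1$ and $s_2$ swapped. Edge-disjointness of the two zigzags follows from (iii) (the handles are pairwise node-disjoint, hence edge-disjoint), from each $Q_i$ avoiding $\oE$, and from the two zigzags using interleaved, non-overlapping sub-intervals of $P(n_1:n_k)$. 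Finally, parts (vi) and (vii) follow from a parity induction on the witness path $P^*$ appearing in the proof of Claim~\ref{claim:exist}: at each step $P^*$ is either an $(s_1,t_1)$ or an $(s_2,t_2)$ path avoiding $e(v_{i-1})$, and its type alternates because each new handle brings the opposite source-terminal pair into reach, so the parity of $I$ pins down $v_I$ as $t_1$ (even $I$) or $t_2$ (odd $I$). The main obstacle is part (iii); once it is in hand, the constructive and parity portions of the claim follow by concatenation and straightforward induction.
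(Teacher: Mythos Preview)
Your treatment of (iii), (iv), (v) is essentially correct, though for (iii) the paper uses a single uniform argument for all $i<j$: the splice $Q_i(u_i:w)Q_j(w:v_j)$ is an $\oE$-avoiding path from $u_i\in V(P(v_{i-2}:v_{i-1}))\setminus\{v_{i-2},v_{i-1}\}$ to $v_j$, so $v_j$ is itself a candidate at iteration $i$; since the $v_\ell$'s are strictly increasing along $P$, this contradicts the maximality of $v_i$. Your separate cycle argument for $j-i\geq 2$ works but is unnecessary. For (i) the paper argues the same way via the maximality of $m_0$: a shared node $w$ yields the $\oE$-avoiding $(s_1,t_1)$ path $Q_1(s_1:w)Q_2(w:t_1)$, forcing $m_0=t_1$ and contradicting Claim~\ref{claim:not}. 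Your sketch via edge-disjoint paths can be made to work, but needs the WLOG assumptions on $n_1,n_k$ to rule out overlaps with $T_1,T_2$.

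The genuine gap is in (vi) and (vii). You claim that the type of the witness path $P^*$ from Claim~\ref{claim:exist} alternates between $(s_1,t_1)$ and $(s_2,t_2)$, ``because each new handle brings the opposite source-terminal pair into reach.'' But Claim~\ref{claim:exist} only asserts that removing $e(v_{i-1})$ leaves \emph{at least one} of the two pairs connected; it does not specify which, and nothing in the construction forces alternation. At several consecutive steps both pairs may remain connected, with $P^*$ of the same type each time. Moreover, $P^*$ is merely an existence witness; the actual $v_i$ may arise from a different handle altogether, so even if the $P^*$'s alternated this would not pin down $v_I$.

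The paper proves (vi) and (vii) by feeding (iv) and (v) back in. For even $I\neq 2$, if $v_I=t_2$ then the two edge-disjoint paths to $v_{I-2}$ constructed in (iv), extended by $Q_{I-1}P(v_{I-1}:t_1)$ and by $P(v_{I-2}:u_I)Q_I$ respectively, assemble into edge-disjoint $(s_1,t_1)$ and $(s_2,t_2)$ paths---exactly what Case~II forbids. The odd case is symmetric. So the correct lever is the same ``would contradict Case~II'' argument you used for (ii), applied once more at the top of the grail, not a parity-tracking of $P^*$.
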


\begin{proof} (i) If $Q_1$ $(=P_0)$ and $Q_2$ share a common node $v$, then
$Q_1(s_1:v)Q_2(v:t_1)$ is a $(s_1,t_1)$ path containing
no edge from $\oE$, which means that $m_0=t_1$ - a contradiction to Claim
\ref{claim:not}. 

(ii) If $Q_1$ ($=P_0$) and $Q_2$ are edge-disjoint, then $Q_1P(m_0,t_1)$
and $T_1P(n_1,u_2)Q_2$ are edge-disjoint $(s_1,t_1)$ and $(s_2,t_2)$ paths
- a contradiction.

(iii) If $Q_{i}$ and $Q_j$ share a common node $v$, then $Q_i(u_i:v)Q_j(v:v_j)$
is a path from a node in $V(P(v_{i-2}, v_{i-1}))\setminus
\{v_{i-2}, v_{i-1}\}$ to $v_{j}$ with no common edge with $\oE$.
Then $v_j$, a descendant of $v_{i}$, satisfies all the properties
of $v_{i}$ - a contradiction to the choice of $v_i$.

(iv) The paths $Q_1P(v_1:u_3)Q_3P(v_3:u_5)\cdots$ $Q_{I-3}P(v_{I-3}:u_{I-1}:v_{I-2})$ and $T_1P(n_1:u_2)Q_2P(v_2:u_4)\cdots P(v_{I-4}:u_{I-2})Q_{I-2}$ are two such paths as shown in Fig. \ref{fig:multigrail-a} by dotted and dashed lines respectively for
$I=4$.

(v) The paths are similar to part (iv) and are shown in Fig. \ref{fig:multigrail-b} for $I=5$.

(vi) If $v_I=t_2$, then the paths in the proof of part (iv) together with
the paths $Q_{I-1}P(v_{I-1}:t_1)$ and $P(v_{I-2}:u_I)Q_I$ allow
construction of two edge disjoint $(s_1,t_1)$ and $(s_2,t_2)$ paths
- so the network would fall under Case I.

(vii) Similar to the proof of part (vi).
\end{proof}

So, under Case IIB(i), there are four different types of networks where
the two source-terminal pairs are connected.

1. For $I=2$, $v_2=t_1$, the grail as shown in Fig. \ref{fig:2I-a}.

2. For $I=2$, $v_2=t_2$, the augmented half-butterfly shown in Fig. \ref{fig:2I-b}.

3. For even $I\neq 2$, the tall grail shown in Fig. \ref{fig:multigrail-a} with even
number of handles and $v_I=t_1$.

4. For odd $I\neq 1$, the tall grail shown in Fig. \ref{fig:multigrail-b} with odd
number of handles and $v_I=t_2$.

For all the above cases, simple XOR coding solutions exist as shown in
Fig. \ref{fig:multigrail} and Fig. \ref{fig:2I}.

\begin{figure}[ht]
\centering
\subfigure[The grail]{
\includegraphics[width=1.5in]{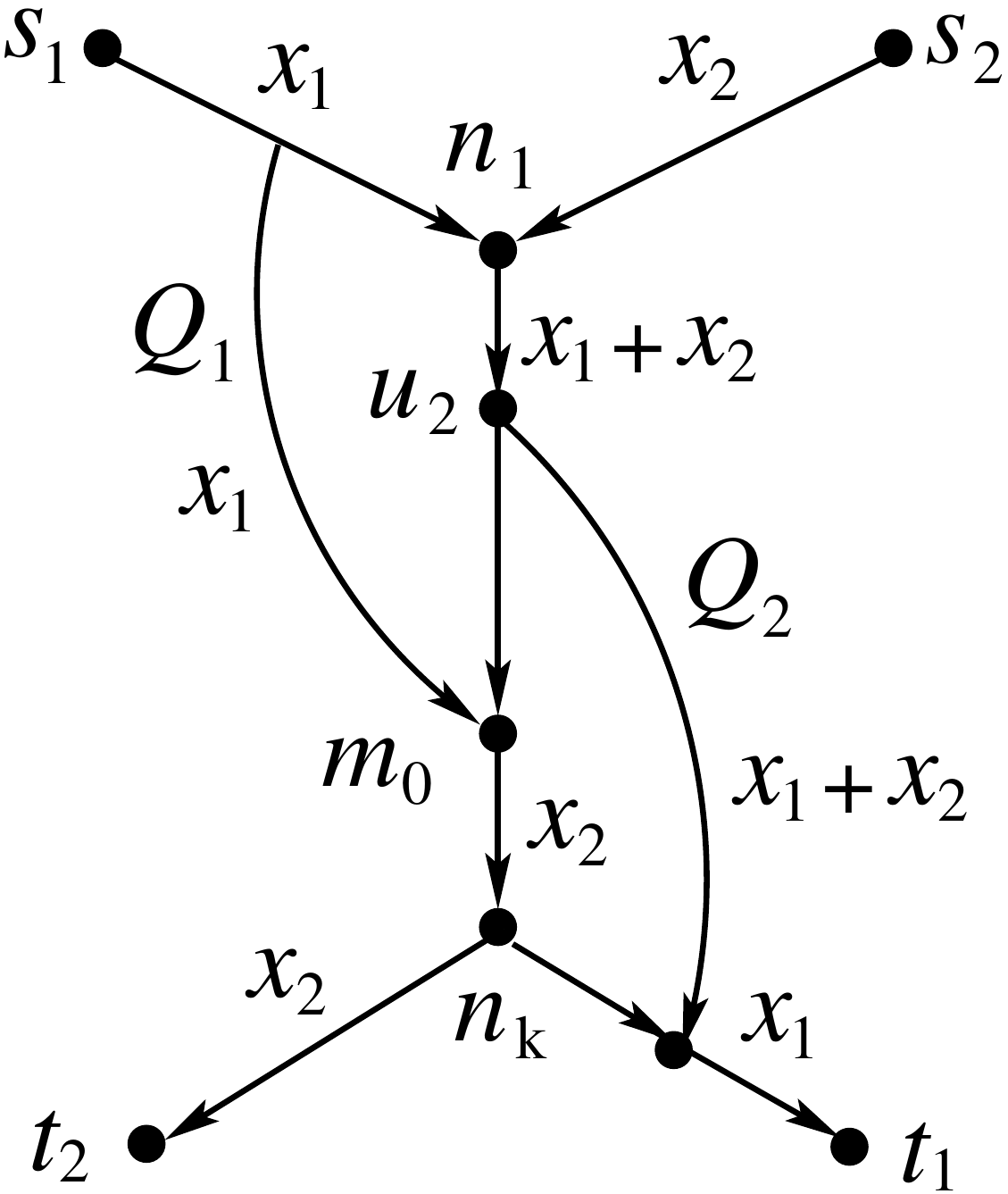}
\label{fig:2I-a}
}
\subfigure[The augmented half-butterfly]{
\includegraphics[width=1.6in]{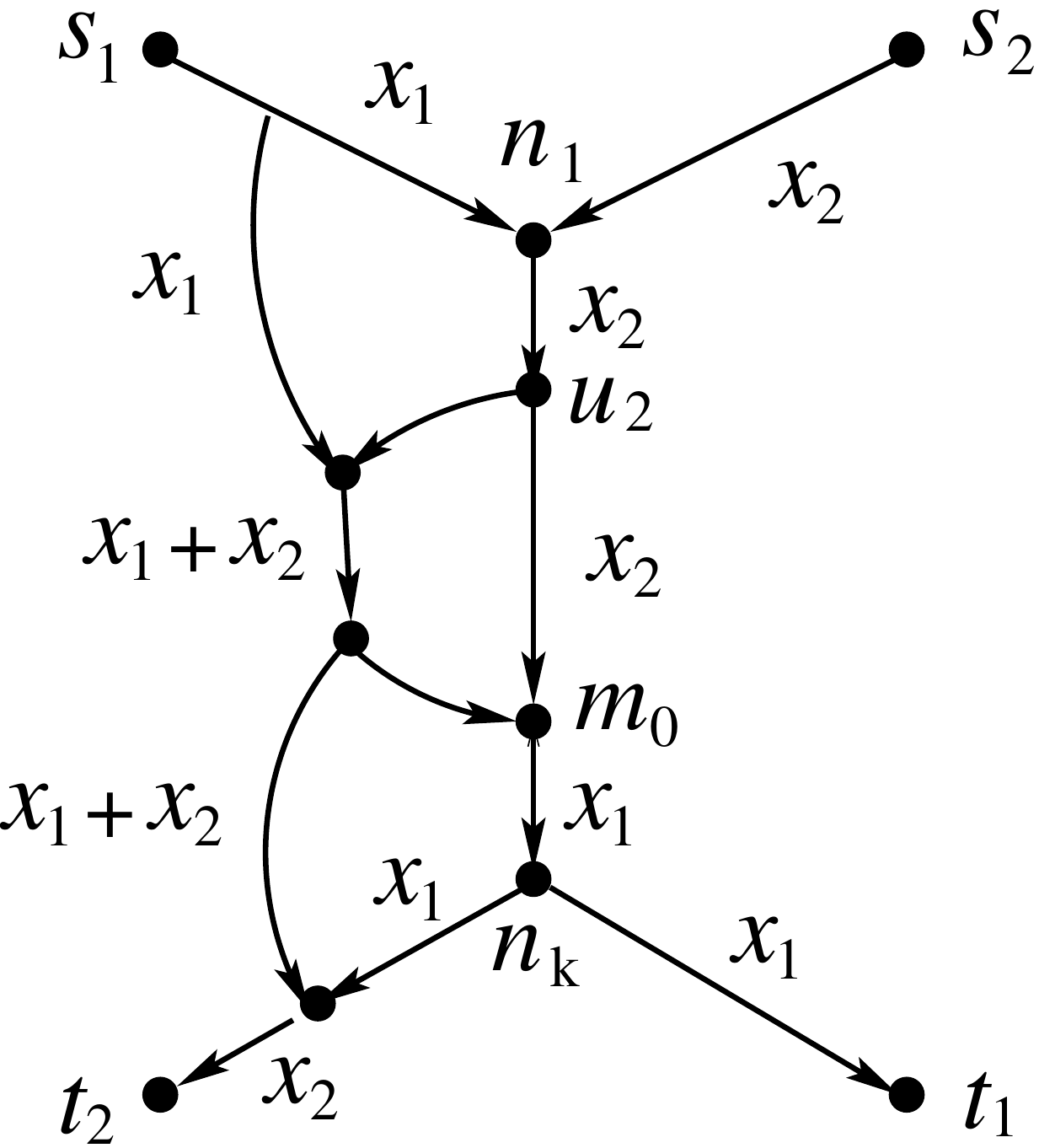}
\label{fig:2I-b}
}
\caption[]{The solvable networks for $I=2$}
\label{fig:2I}
\end{figure}

{\it Case IIB(ii):} {\it Removing edge $e_1$ disconnects $(s_1,t_1)$
but not $(s_2,t_2)$.}

The proof for this case is similar to Case IIB(i).

{\it Case IIB(iii):} {\it Removing edge $e_1$ disconnects neither $(s_1,t_1)$ nor $(s_2,t_2)$.}

In this case, we will show that we can keep removing edges from the network
till the network falls under Case IIB(i) or Case IIB(ii).
Let us denote the $m_0$ defined
in Case IIB(i) as $m_0^{(1)}$, and define $m_0^{(2)}$ similarly considering
$(s_2,t_2)$ paths instead of $(s_1,t_1)$ paths (See Fig. \ref{fig:reduction}). By Claim \ref{claim:not},
$m_0^{(1)}, m_0^{(2)} \in \oN$. Let the corresponding paths
(handles)
$P_0$ be denoted by $P_0^{(1)}$ and $P_0^{(2)}$.
WLOG, let us assume that
$m_0^{(1)}$ is below or same as $m_0^{(2)}$. In this case we will show that it is possible to selectively remove edges till the network falls under Case IIB(i).
Consider any edge on $P(m_0^{(1)}:n_k)$ and denote it by $e$. Also let us
denote the first edge
on $P_0^{(2)}$ which is not on $T_1$ by $e^\prime$. The relevant nodes, edges,
and the paths are shown in Fig. \ref{fig:reduction}. 

\begin{figure}[ht]
\centering
\includegraphics[width=1.3in]{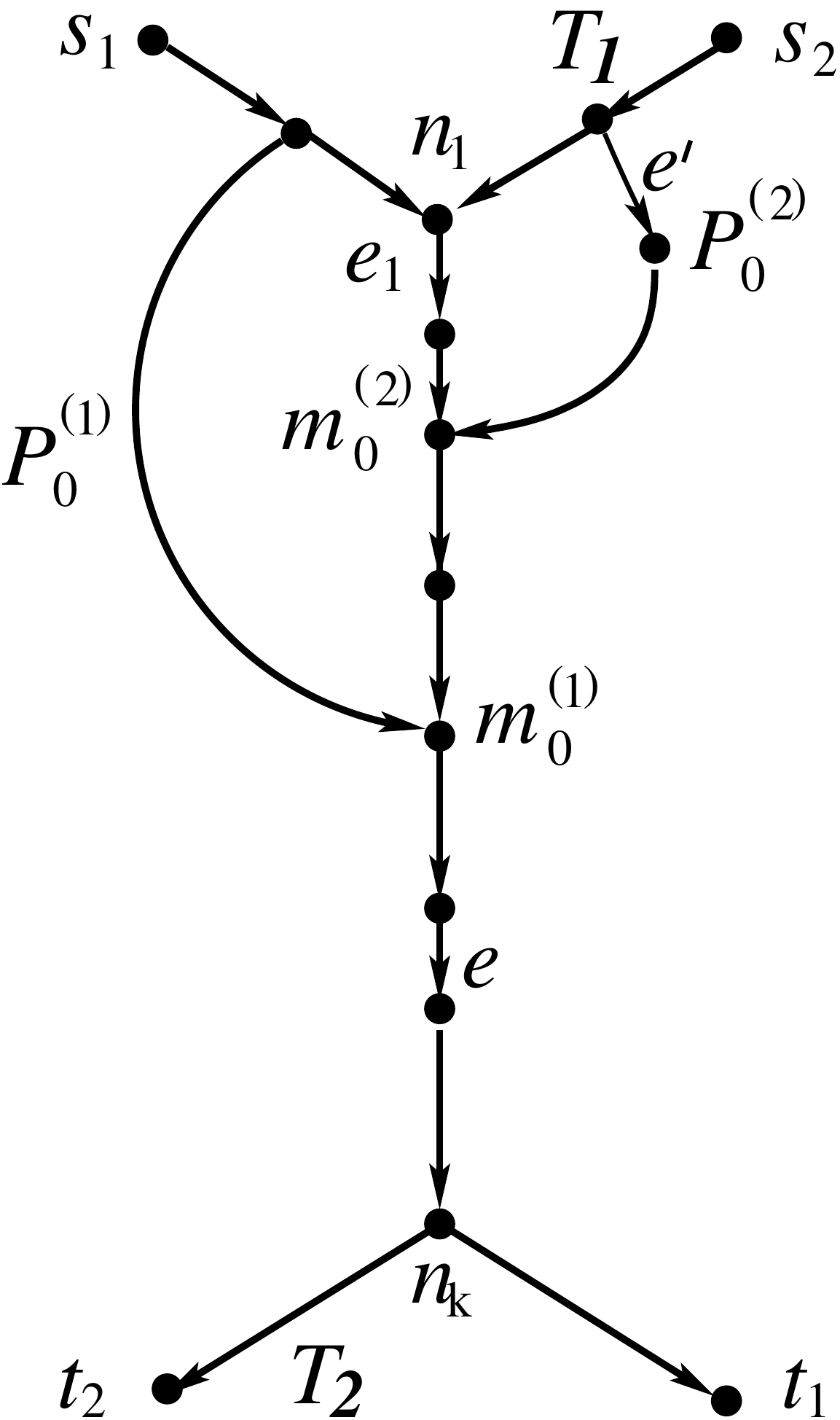}
\caption[]{Reduction of Case IIB(iii) to Case IIB(i)}
\label{fig:reduction}
\end{figure}

\begin{claim}\label{claim:removal}
If removal of $e$ does not disconnect $(s_1,t_1)$ (resp. $(s_2,t_2)$),
then removal of both $e$ and $e^\prime$ also does not disconnect
$(s_1,t_1)$ (resp. $(s_2,t_2)$).
\end{claim}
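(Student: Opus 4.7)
\emph{Plan.} I would prove the $(s_1,t_1)$ statement; the $(s_2,t_2)$ statement follows by a symmetric argument that uses WLOG assumption~2 in place of assumption~1. I would suppose that removal of $e$ does not disconnect $(s_1,t_1)$ and fix an $(s_1,t_1)$ path $R$ avoiding $e$. If $R$ also avoids $e'$, we are done immediately. Otherwise $R$ contains $e'$; write $w = \mathrm{tail}(e')$ and $y = \mathrm{head}(e')$. The goal is then to assemble a new $(s_1,t_1)$ path avoiding both $e$ and $e'$ out of $R$, $P_0^{(1)}$, $P_0^{(2)}$, $T_1$, and $P$.

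\emph{Locating $w$.} Since $e'$ is by definition the first edge of $P_0^{(2)}$ not lying on $T_1$, the initial segment $P_0^{(2)}(s_2:w)$ is contained in $T_1$, so $w$ itself lies on $T_1$ and is hence an ancestor of $n_1$. Concatenating $T_1(s_2:w)$ with $R(s_1:w)$ exhibits paths from both $s_1$ and $s_2$ to $w$. By WLOG assumption~1 (no proper ancestor of $n_1$ has paths from both sources), this forces $w = n_1$. A parallel fact for the $(s_2,t_2)$ case uses assumption~2 applied to $y$: the path $R(y:t_2)$ together with $P_0^{(2)}(y:m_0^{(2)}) \cdot P(m_0^{(2)}:t_1)$ gives $y$ paths to both terminals, forcing $y=n_k$.

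\emph{Bridging step and the main obstacle.} Given $w=n_1$, the problem reduces to finding a route from $n_1$ to some node on $R(y:t_1)$ that avoids both $e$ and $e'$; splicing that route with the suffix of $R$ will yield the desired $(s_1,t_1)$ path. A natural candidate begins with $P(n_1:m_0^{(2)})$, which lies strictly above $e$ on $P$ (because the WLOG-chosen ordering places $m_0^{(2)}$ weakly above $m_0^{(1)}$ and $e$ lies strictly below $m_0^{(1)}$) and which avoids $e'$ since $e' \notin \overline{E}$. Continuing this bridge onto $R(y:t_1)$ is the main obstacle: edges in a DAG cannot be reversed, so one cannot simply traverse $P_0^{(2)}(y:m_0^{(2)})$ backward. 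I would look for a common node of $R(y:t_1)$ with either the forward-reachable set from $m_0^{(2)}$ avoiding $\{e,e'\}$ or with the tail portion of $P_0^{(1)} \cdot P(m_0^{(1)}:t_1)$ modified to skirt $e$; failing that, I would derive a contradiction by arguing that the absence of such a bridge either forces two edge-disjoint $(s_1,t_1)$ and $(s_2,t_2)$ paths (putting us in Case~I) or violates the ancestral-maximality used in defining $m_0^{(1)}$ or $m_0^{(2)}$. The detailed case analysis needed to complete this bridging is where the main combinatorial work of the proof lies.
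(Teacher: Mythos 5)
Your proposal identifies the right starting point (take an $(s_1,t_1)$ path $R$ avoiding $e$; if it also avoids $e'$ there is nothing to prove) and the observation that $\mathrm{tail}(e')$ must lie on $T_1$ and therefore, by WLOG assumption~1, an $(s_1,t_1)$ path can only meet $e'$ if $\mathrm{tail}(e')=n_1$. But from there the approach diverges from the paper's and does not come together, for two reasons.

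First, the ``parallel fact'' that would locate $y=\mathrm{head}(e')$ is simply wrong: $y$ lies on $P_0^{(2)}$, which terminates at $m_0^{(2)}\in\oN$, so $y$ is an \emph{ancestor} of $m_0^{(2)}$ and hence an ancestor (not a descendant) of $n_k$. WLOG assumption~2 constrains descendants of $n_k$, so it cannot be applied to $y$, and in fact $y\prec n_k$ strictly; the conclusion $y=n_k$ is false. Second, and more seriously, the bridging step is where the entire burden of the proof sits, and you explicitly leave it open (``the detailed case analysis \dots is where the main combinatorial work of the proof lies''). As written, the proposal is a plan, not a proof.

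The paper avoids the bridging difficulty entirely by replacing the \emph{prefix} rather than the suffix. Starting from any $(s_1,t_1)$ path $P''$ avoiding $e$, one shows (using the maximality in the definition of $m_0^{(1)}$, or the fact that $P''$ must use $e_1$) that $P''$ shares an edge of $P(n_1:\mathrm{tail}(e))$, hence a node $v$ of that segment. One then forms $P(s_1:v)P''(v:t_1)$. The prefix $P(s_1:v)$ lies on $P(s_1:n_1)\cup\oE$; since $e'\notin\oE$ and (by WLOG~1) $\mathrm{tail}(e')$ cannot be an interior node of $P(s_1:n_1)$, the prefix avoids $e'$, and it avoids $e$ because $v\preceq\mathrm{tail}(e)$. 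The suffix $P''(v:t_1)$ avoids $e$ by choice of $P''$ and avoids $e'$ because $\mathrm{tail}(e')\preceq n_1\preceq v$ (in the boundary case $v=n_1$, the shared edge is $e_1\neq e'$, so $e'$ is still excluded). The symmetric construction $T_1P(n_1:v)P''(v:t_2)$ handles the $(s_2,t_2)$ statement. Note that there is no node $v$ to bridge \emph{to}; the splice is done at a node you already know both paths pass through, which is what makes the argument close in a few lines. You would do well to adopt this prefix-replacement viewpoint rather than trying to route forward from $n_1$ onto $R(y:t_1)$.
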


\begin{proof} By definition of $m_0^{(1)}$ and $m_0^{(2)}$, any
$(s_1,t_1)$ or $(s_2,t_2)$ path $P^{\prime\prime}$
not containing $e$ contains at least one edge, in particular one node $v$,
from the path $P(n_1:tail (e))$. If $P^{\prime\prime}$ is a $(s_1,t_1)$ path
then $P(s_1: v)P^{\prime\prime}(v: t_1)$ is a $(s_1,t_1)$ path not containing
$e$ and $e^\prime$. Similarly, if $P^{\prime\prime}$ is a $(s_2,t_2)$ path
then $T_1P(n_1: v)P^{\prime\prime}(v: t_2)$ is a $(s_2,t_2)$ path not
containing $e$ and $e^\prime$.
\end{proof}

Note that if removal of any edge disconnects $(s_1,t_1)$ or $(s_2,t_2)$, such an edge
must be on $P(n_1:n_k)$. 
So, by Claim \ref{claim:removal}, removing $e^\prime$ will not result in a network
where there is an edge whose removal disconnects both $(s_1,t_1)$
and $(s_2,t_2)$. So, the resulting network will still satisfy Case IIB.
We can thus continue this process of removing selected edges till
the network satisfies Case IIB(i).

\subsection{Proof of sufficiency}
We now prove that if an edge $e$ satisfying the hypothesis
of the theorem exists then the capacity region of the network is given
by $r_1+r_2 \leq 1$. In particular, this implies that the rate $(1,1)$ is
not supported.
WLOG, let us assume that there is an edge $e$ whose removal disconnects
$(s_1,t_1)$, $(s_2,t_2)$ and $(s_2,t_1)$.

Since both the source-terminal pairs are assumed to be connected, there exist
$(s_1,t_1)$ and $(s_2,t_2)$ paths. Then by time-sharing between the achievable
rates $(1,0)$ and $(0,1)$, we can achieve any rate $(r_1,r_2)$ with
$r_1+r_2=1$.
It only remains to be proved that
no rate-pair with $r_1+r_2 > 1$ is achievable.

Note that $e$ may or may not disconnect $(s_1,t_2)$.
Fig. \ref{fig:capacity} depicts the connectivity between the sources and the
terminals and the role of $e$.

Let us assume that the network supports a $n$-length vector (possibly
non-linear) network code by which $\bx_1 \in \cA^{r_1n}$ and
$\bx_2 \in \cA^{r_2n}$ are communicated to $t_1$ and $t_2$ respectively. So, under this coding scheme, each edge carries
an element of $\cA^n$, and the terminals $t_1$ and $t_2$ can recover the source
vectors $\bx_1 \in \cA^{r_1n}$ and $\bx_2\in \cA^{r_2n}$ respectively.
Let $\bY_e = f(\bx_1, \bx_2)  \in \cA^n$ denote the vector carried by the
edge $e$. Since $e$ is a cut between $s_2$ and $t_2$, and since $t_2$
can recover $\bx_2$, $f(\bx_1, \bx_2)$ is a $1-1$ function of $\bx_2$
for any fixed value of $\bx_1$. 

\begin{figure}[ht]
\centering
\includegraphics[width=1.4in]{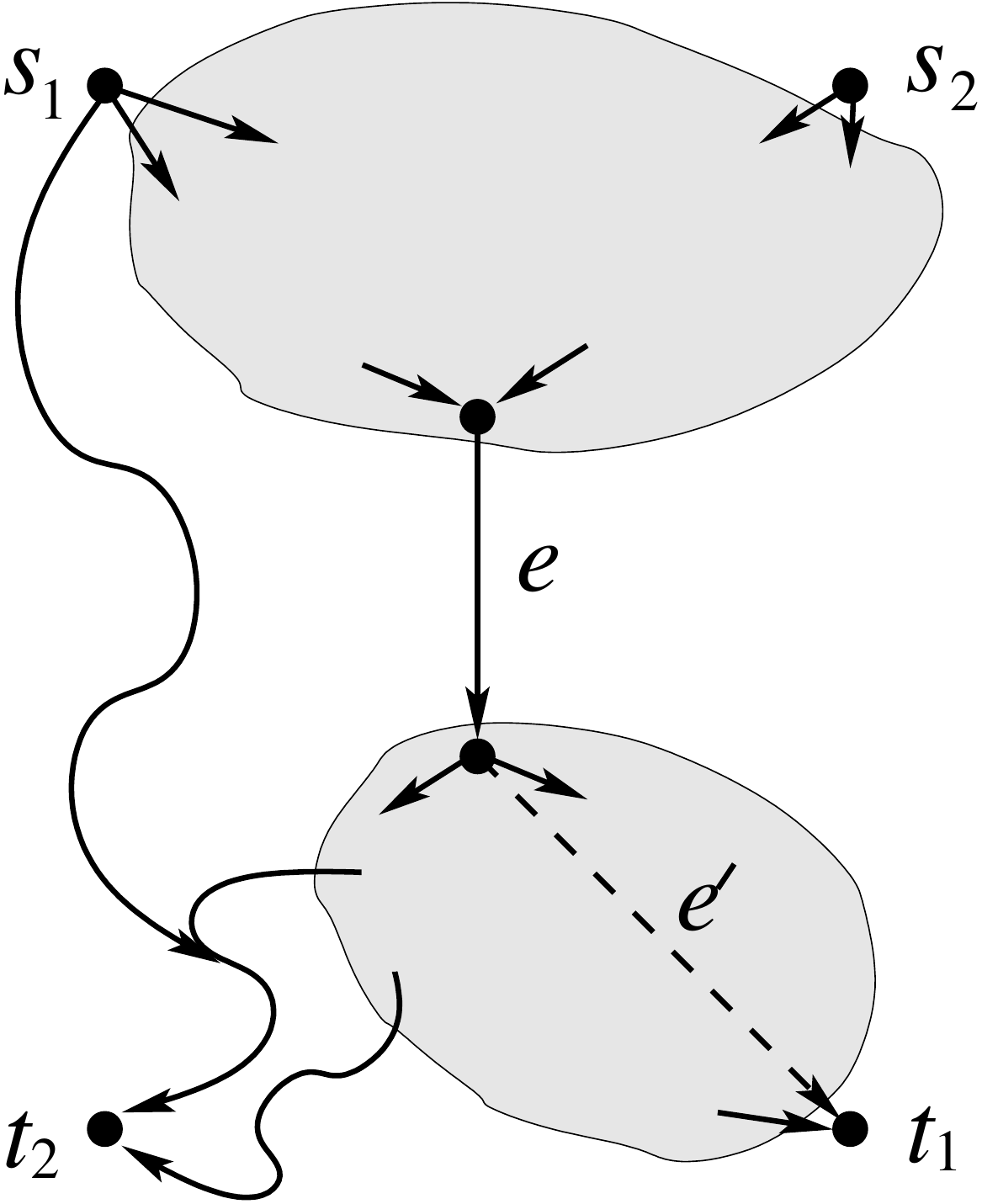}
\caption[]{Connectivity in a network with an edge $e$ satisfying the hypothesis of Theorem~\ref{theorem:mainres}}
\label{fig:capacity}
\end{figure}

\begin{figure}[ht]
\centering
\includegraphics[width=1.6in]{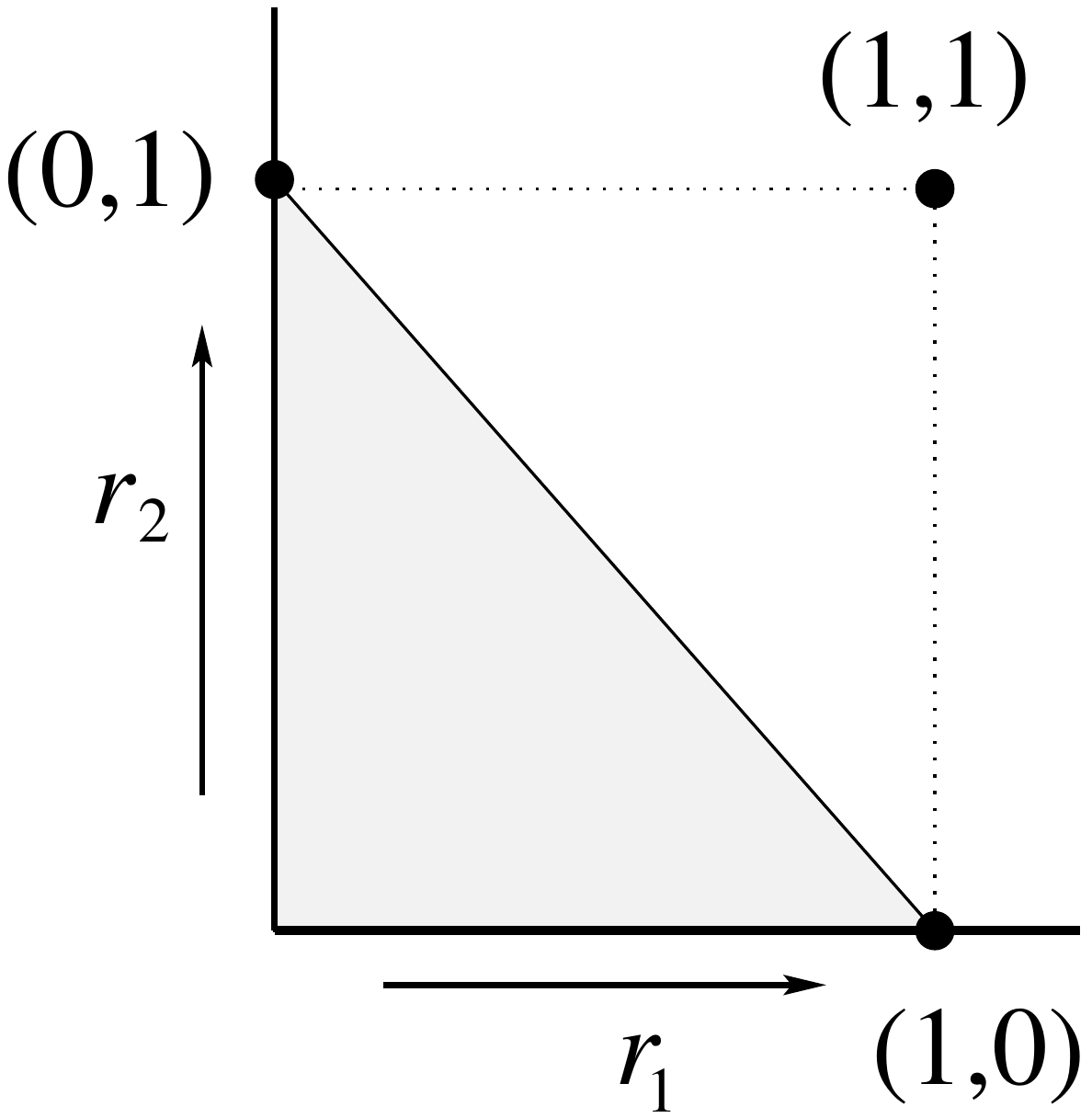}
\caption[]{Capacity region of a network with an edge satisfying the hypothesis
of Theorem~\ref{theorem:mainres}}
\label{fig:capacit}
\end{figure}

Now let us consider a new network obtained by adding an edge
$e^\prime$ from $head(e)$ to $t_1$ (see Fig. \ref{fig:capacity}).
We will show that even for this
stronger network the rate pair $(r_1,r_2)$ must satisfy $r_1+r_2\leq 1$.
Even in this new network, $e$ is a cut
between $\{s_1,s_2\}$ and $t_1$. Since $t_1$ recovers $\bx_1$ and
$f(\bx_1,\bx_2)$ is a $1-1$ function of $\bx_2$ for a given $\bx_1$,
$t_1$ can recover $\bx_2$ as well in this new network. Hence in the new network
$t_1$ recovers both $\bx_1$ and $\bx_2$. So, $f$ is a $1-1$ function
of the pair $(\bx_1,\bx_2)$. So, $|\cA |^n \geq |\cA|^{r_1n +r_2n}
\Rightarrow r_1+r_2 \leq 1$.
\section{Conclusion}\label{section:conclusion}
In this paper, we gave a very simple necessary and sufficient condition for
a directed acyclic network to simultaneously support two unicast sessions at
rate $1$ under vector network coding, and used
this to provide an exact capacity region characterization,
namely $r_1+r_2 \leq 1$,
of networks which do not support the rate $(1,1)$.

\section{Acknowledgment}
The authors thank Chih-Chun Wang for pointing out the equivalence of
our condition with the network sharing bound, and the grail in Fig.~\ref{fig:2I-b}.
This work was supported in part by Bharti Centre for Communication at
Indian Institute of Technology Bombay, and a grant from Department of
Science and Technology (DST), India.

\end{document}